\tikzset{>=spaced stealth'}
\newcommand*{\T}{^{\mkern-1.5mu\mathsf{T}}}
\newcommand{\N}{\mathbb{N}}
\newcommand{\Z}{\mathbb{Z}}
\newcommand{\ZZ}{\Z_{\ge 0}}
\newcommand{\R}{\mathbb{R}}
\newcommand{\RR}{\R_{\ge 0}}
\newcommand{\mIP}{mod-IP($a$)}
\newcommand{\defproblem}[3]{
	\vspace{2mm}
	%  \hline
	\vspace{1mm}
	\noindent\fbox{
		\begin{minipage}{0.95\textwidth}
			#1 \\
			{\bf{Input:}} #2  \\
			{\bf{Task:}} #3
		\end{minipage}
	}
	%  \vspace{1mm}
	%  \hline
	\vspace{2mm}
}
\newtheorem{theorem}{Theorem}
\newtheorem{lemma}[theorem]{Lemma}
\newtheorem{proposition}[theorem]{Proposition}
\newtheorem{corollary}[theorem]{Corollary}
\title{ETH-Tight FPT Algorithm for \\ Makespan Minimization on Uniform Machines}
\author{Lars Rohwedder\footnote{\href{mailto:rohwedder@sdu.dk}{rohwedder@sdu.dk}, University of Southern Denmark, Odense, Denmark. Supported by Dutch Research Council (NWO) project “The Twilight
Zone of Efficiency: Optimality of Quasi-Polynomial Time Algorithms” [grant number OCEN.W.21.268]}}
\date{}
\begin{document}
\maketitle
\begin{abstract}
	Given $n$ jobs with processing times $p_1,\dotsc,p_n\in\N$ and $m\le n$ machines with speeds $s_1,\dotsc,s_m\in\N$ our goal is to allocate the jobs to machines minimizing the makespan.
	We present an algorithm that solves the problem in time $p_{\max}^{O(d)} n^{O(1)}$, where $p_{\max}$ is the maximum processing time and $d\le p_{\max}$ is the number of distinct processing times. This
	is essentially the best possible due to a lower bound based on the exponential time hypothesis (ETH).

	Our result improves over prior works that had a quadratic term in $d$ in the exponent
	and answers an open question by Kouteck\'y and Zink.
	The algorithm is based on integer programming techniques combined with novel ideas based
	on modular arithmetic.
	They can also be implemented efficiently for the more compact high-multiplicity instance encoding. 
\end{abstract}

\section{Introduction}
\label{sec:introduction}
We consider a classical scheduling problem, where we need to allocate $n$ jobs with processing times
$p_1,\dotsc,p_n$ to $m\le n$ machines with speeds $s_1,\dotsc,s_m$. Job $j$ takes time $p_j/s_i$
if executed on machine $i$ and only one job can be processed on a machine at a time. Our goal is to minimize
the makespan. Formally, the problem is defined as follows.

\defproblem{Makespan Minimization on Uniform Machines}{
	$n \ge m\in\N$, $p_1,\dotsc,p_n\in\N$, $s_1,\dotsc,s_m\in\N$
	}{
		Find assignment $\sigma: \{1,\dotsc,n\} \rightarrow \{1,\dotsc,m\}$ that minimizes
		\begin{equation*}
			\max_{i=1,\dotsc,m} \sum_{j : \sigma(j) = i} \frac{p_j}{s_i} \ .
		\end{equation*}
	}

The special case with $s_1 = \cdots = s_m = 1$ is called Makespan Minimization on \emph{Identical} Machines.
Either variant is strongly NP-hard and has been studied extensively towards approximation algorithms.
On the positive side, both variants admit an EPTAS~\cite{jansen2020closing, jansen2010eptas}, that is,
a $(1 + \epsilon)$-approximation algorithm in time $f(\epsilon) \cdot n^{O(1)}$
for any $\epsilon > 0$. Here, $f(\epsilon)$ is a function that may depend exponentially on $\epsilon$.

More recently, the problem has also been studied regarding exact FPT algorithms, where the
parameter is the maximum (integral) processing time $p_{\max} = \max_j p_j$ or
the number of different processing times $d = |\{p_1,\dotsc,p_n\}|$, or a combination of both.
Note that $d \le p_{\max}$. An algorithm is fixed-parameter tractable (FPT) in a parameter $k$,
if its running time is bounded by $f(k) \cdot \langle \mathrm{enc} \rangle^{O(1)}$, that is, the running time can have
an exponential (or worse) running time dependence on the parameter, but not on the overall instance encoding length $\langle \mathrm{enc} \rangle$.
The study of FPT algorithms in the context of our problem was initiated
by Mnich and Wiese~\cite{MnichW15}, who showed, among other results,
that for identical machines there is an FPT algorithm in $p_{\max}$.
The running time was improved through the advent of new generic integer programming (ILP) tools.
Specifically, a series of works led to fast FPT algorithms for highly structured integer programs called
$n$-fold Integer Programming, see e.g.~\cite{CslovjecsekEHRW21}. 
Makespan Minimization on Uniform Machines (and, in particular, the special case
on identical machines) can be modeled in this structure and one can directly derive FPT results from the algorithms
known for $n$-fold Integer Programs~\cite{KnopK18}. Namely, the state-of-the-art for $n$-fold ILPs~\cite{CslovjecsekEHRW21} leads
to a running time of
\begin{equation*}
	p_{\max}^{O(d^2)} n^{O(1)} \le p_{\max}^{O(p_{\max}^2)} n^{O(1)} \ .
\end{equation*}
Kouteck\'y and Zink~\cite{KouteckyZ20} stated as an open question whether the exponent of $O(d^2)$ can be improved to $O(d)$.
This is essentially the best one can hope for: even for identical machines
Chen, Jansen, and Zhang~\cite{chen2014optimality}
have shown that there is no algorithm that given an upper bound $U\in \N$ decides if the optimal makespan
is at most $U$ 
in time $2^{U^{0.99}} n^{O(1)}$, assuming the exponential time hypothesis (ETH).
Since $d \le p_{\max} \le U$ there cannot be an algorithm for our problem
with running time $2^{O(p_{\max}^{0.99})} n^{O(1)}$
or $p_{\max}^{O(d^{0.99})} n^{O(1)}$ either.

A similar gap of understanding exists for algorithms for integer programming in several variants, see~\cite{RohwedderW25} for an overview.
Since no improvement over the direct application of $n$-fold Integer Programming is known,
Makespan Minimization on Uniform Machines can be seen as a benchmark problem for integer
programming techniques. For brevity we omit a definition of $n$-fold Integer Programming here and refer the reader
to~\cite{CslovjecsekEHRW21} for further details.

Jansen, Kahler, Pirotton, and Tutas~\cite{JansenKPT24} proved that for the case where
the number of distinct machine speeds, that is, $|\{s_1,\dotsc,s_m\}|$, is polynomial in $p_{\max}$, the 
running time of $p_{\max}^{O(d)} n^{O(1)}$ can be achieved. Note that this includes the
identical machine case. Jansen et al.~\cite{JansenKPT24} credit a non-public manuscript by Govzmann, Mnich, and Omlo for discovering the identical machine case 
earlier and for some proofs used in their result.

\paragraph{Our contribution.}
We fully settle the open question by Kouteck\'y and Zink~\cite{KouteckyZ20}.
\begin{theorem}
Makespan Minimization
on Uniform Machines can be solved in time $p_{\max}^{O(d)} n^{O(1)}$.
\end{theorem}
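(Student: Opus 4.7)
The plan is to design a specialized algorithm that bypasses the $p_{\max}^{O(d^2)}$ cost of the generic $n$-fold IP approach by exploiting the very simple block structure of our configuration IP---a single scalar capacity constraint per machine. First I would binary search over the optimal makespan $T$; this is standard and only adds polynomial overhead. For fixed $T$, the natural configuration integer program has a variable $c^{(i)} \in \ZZ^d$ per machine, one capacity constraint $\sum_t p_t c_t^{(i)} \le T s_i$ per machine, and $d$ global linking (demand) constraints $\sum_i c_t^{(i)} = n_t$.

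Second, to shave the exponent from $d^2$ down to $d$, I would use a modular decomposition, as hinted by the mod-IP($a$) terminology in the abstract. Fix a modulus $a = \Theta(p_{\max})$ and split each machine's configuration into $c^{(i)} = a\, q^{(i)} + r^{(i)}$ with $r^{(i)} \in \{0,\dots,a-1\}^d$. Only $a^d = p_{\max}^{O(d)}$ distinct residue vectors exist; the demand equations reduce to congruences determined by the aggregated residues, while the capacity on each quotient becomes $\sum_t p_t q_t^{(i)} \le \lfloor (T s_i - \sum_t p_t r_t^{(i)})/a \rfloor$. The plan is then to show that the aggregated quotient problem (the relaxation mod-IP-relax($a$)) admits a polynomial-time algorithm, so that the overall complexity is $p_{\max}^{O(d)}$ (enumeration over residue patterns) multiplied by a polynomial in the input.

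The main obstacle will be the design and analysis of mod-IP-relax($a$): both showing that it faithfully captures Makespan Minimization after the modular decomposition, and proving that it is itself solvable in polynomial time. I expect the latter to rely on a polyhedral argument that exploits the very restricted block shape---a single scalar inequality with the same coefficient vector $p$ across all blocks---possibly via a flow-based reformulation, integrality of an appropriate LP relaxation, or a dedicated dynamic program over the aggregated residue profile. Once the framework is in place, combining each polynomial-time-computed quotient solution with its chosen residue pattern yields a feasible schedule of makespan $T$, and the high-multiplicity extension follows by tracking aggregated counts of jobs/machines rather than the jobs and machines themselves.
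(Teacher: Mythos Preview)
Your plan has a genuine gap at the enumeration step. You note that there are only $a^d=p_{\max}^{O(d)}$ distinct residue vectors $r^{(i)}$, which is true, but what must actually be fixed before the quotient problem is well-posed is an assignment of a residue vector to \emph{each} of the $m$ machines. The number of such assignments is $(a^d)^m$, and even if you only track the multiset of residue vectors used, the count is $\binom{m+a^d-1}{a^d-1}$, which is not $p_{\max}^{O(d)}$. You hint at an ``aggregated residue'' profile, but the per-machine capacity depends on its own $r^{(i)}$ through $\sum_t p_t r_t^{(i)}$, so collapsing to a single global residue loses exactly the information needed to certify feasibility. Moreover, even with all $r^{(i)}$ fixed, the quotient problem is still an IP with $m$ blocks and $d$ linking constraints---structurally the same as the configuration IP you started from---so the hope that it becomes polynomial via ``flow, LP integrality, or DP'' is unsupported.

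The paper takes a rather different route that avoids both issues. The modulus $a$ is one of the $d$ processing times (so only $d$ guesses, not $p_{\max}^{O(d)}$), and the modular relaxation is applied \emph{only to big machines} (those with $T_i\ge p_{\max}^4$); small machines retain exact load constraints. The relaxed model is then cast as an instance of Multi-Choice Integer Programming with $O(d)$ global rows, at most $p_{\max}^4+a$ machine types, and entries bounded by $p_{\max}^{O(1)}$, and a dedicated Steinitz-lemma-based algorithm solves this class in time $p_{\max}^{O(d)} m^{O(1)}$---so the exponent $O(d)$ comes from the IP solver, not from enumerating residues. The second key ingredient, absent from your proposal, is a combinatorial \emph{repair} step: a feasible solution to the modular IP is converted into an exact Multiway Partitioning by stripping and then greedily reinserting bundles of $a$ equal-size jobs on the big machines, with an extra ``pivot'' constraint in the IP guaranteeing enough slack for this to succeed. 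Without this repair argument, mere feasibility of the modular relaxation does not imply feasibility of the original scheduling instance.
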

We first prove this for the following intermediate problem, which
is less technically involved and simplifies the presentation of our algorithm.

\defproblem{Multiway Partitioning}{
	$n\ge m\in \N$, $p_1,\dotsc,p_n\in \N$, $T_1,\dotsc,T_m\in \N$ with
	\begin{equation*}
	p_1 + \cdots + p_n = T_1 + \cdots + T_m \ .
	\end{equation*}
}{
	Find partition $A_1\dot\cup\cdots\dot\cup A_m = \{1,\dotsc,n\}$ such that
	
	\begin{equation*}
		\sum_{j\in A_i} p_j = T_i \quad \text{for all } i=1,\dotsc,m \ .
	\end{equation*}
}

For consistency, we also use the job-machine terminology when talking about Multiway Partitioning.
As a subroutine we solve following generic integer programming problem that might be of independent interest.  

\defproblem{Multi-Choice Integer Programming}{
	$n,d,\Delta\in\N$, $A\in \Z^{d\times n}$ with $|A_{ij}| \le \Delta$, $b\in \Z^d$, $c\in\Z^n$. Further, a partition $P$ of $\{1,\dotsc,n\}$ and $t_S\in\N$ for each $S\in P$.
}{
	Find $x \in \ZZ^n$ maximizing $c\T x$, subject to $A x = b$ and $\sum_{i\in S} x_i = t_S$ for all $S\in P$.
}

\begin{theorem}\label{thm:ilp}
	Multi-Choice Integer Programming can be solved in time
	\begin{equation*}
		(m \Delta |P|)^{O(m)} (n + t)^{O(1)} \ ,
	\end{equation*}
	where $t = \sum_{S\in P} t_S$.
\end{theorem}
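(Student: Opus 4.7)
The plan is to prove Theorem~\ref{thm:ilp} through dynamic programming over the partition blocks, guided by the LP relaxation; throughout we write $d$ for the number of rows of $A$ (denoted $m$ in the theorem statement). The starting observation is that each block $S\in P$ interacts with the global constraints only through its aggregate $v_S := A_S x^{(S)} \in \Z^d$, where $x^{(S)}$ denotes the restriction of $x$ to the columns of $S$. The problem therefore decomposes: pick an aggregate $v_S\in V_S$ for each block such that $\sum_S v_S = b$, maximizing $\sum_S f_S(v_S)$, where
\[
f_S(v) := \max\bigl\{c_S\T x^{(S)} : A_S x^{(S)} = v,\ \textstyle\sum_{i\in S} x_i = t_S,\ x^{(S)} \ge 0\ \text{integer}\bigr\}.
\]
I would compute $f_S$ (over a suitable restricted set of relevant aggregates $v$) by a column-by-column DP inside each block $S$, and then solve the combining problem with a second DP whose state is the running partial aggregate in $\Z^d$.

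The critical step is to bound the effective state space so that the dependence on $t$ is polynomial, not exponential. The natural state space for partial aggregates has entries up to $t\Delta$, which is too big. I would shrink it using the LP relaxation: compute a vertex $x^{LP}$, which has at most $d+|P|$ nonzero coordinates. Since every block with $t_S>0$ needs at least one nonzero coordinate, at most $d$ coordinates are ``extras'', and a straightforward accounting shows that the number of fractional entries of $x^{LP}$ is $O(d)$. Rounding $y_i := \lfloor x^{LP}_i\rfloor$ gives an integer baseline with per-block deficits $\delta_S := t_S - \sum_{i\in S} y_i$ summing to $O(d)$, and in particular only $O(d)$ blocks have $\delta_S>0$. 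The problem then reduces to a ``correction'' IP $A z = b - A y$ with $\sum_{i\in S} z_i = \delta_S$ and $z\ge -y$, which has small RHS in the global constraints ($\|b - Ay\|_\infty = O(d\Delta)$) and nontrivial partition-row RHS on only $O(d)$ blocks.

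The main obstacle is obtaining a tight proximity-style bound in this block-structured setting: a direct Eisenbrand--Weismantel proximity applied to the full $(d+|P|)$-row correction IP would yield a distance bound exponential in $|P|$, which is too weak. The resolution is to exploit the special structure of the partition constraints (entries in $\{0,1\}$ with disjoint supports): after conditioning on the baseline $y$, the nontrivial corrections concentrate on $O(d)$ ``active'' blocks, which I would enumerate directly ($|P|^{O(d)}$ choices). For each such choice, the per-block problem becomes a $d$-row IP of small support solvable in $(d\Delta)^{O(d)}$ time via a short-vector DP, and the combining DP ranges only over partial aggregates within an $\ell_\infty$-ball of radius $O(d\Delta)$ around the LP trajectory. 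Multiplying these ingredients gives the stated bound $(d\Delta|P|)^{O(d)}(n+t)^{O(1)}$.
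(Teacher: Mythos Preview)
Your approach is genuinely different from the paper's, and the central step does not go through as stated. The paper does \emph{not} touch the LP relaxation or proximity at all. It builds the solution one unit at a time (a total of $t$ increments), interleaving the blocks so that block $S$ is incremented at the continuous ``times'' $1/t_S,2/t_S,\dots,1$; this yields a layered DAG with $t+1$ layers whose state is the current right-hand side $Ax'$. The Steinitz Lemma is applied \emph{inside each block} to the vectors $A_i/(2\Delta)-Ax^{(S)}/(2\Delta t_S)$ to show that, along this interleaved order, the partial right-hand side stays within $\ell_\infty$-distance $4d\Delta|P|$ of the moving target $d_k\cdot b$. That bounds the state space by $(O(d\Delta|P|))^d$ per layer and gives the theorem directly.

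Your plan hinges on the assertion that, after rounding an LP vertex to $y$, some optimal correction $z=x^\star-y$ is supported on only $O(d)$ blocks. The fractional-support counting you sketch does not give this: it shows only that the \emph{deficits} $\delta_S=t_S-\sum_{i\in S}y_i$ vanish outside $O(d)$ blocks, i.e.\ that $y$ already satisfies the partition constraints on the remaining blocks. It says nothing about whether the \emph{optimal} integer solution agrees with $y$ there. Standard proximity on the full $(d+|P|)$-row system is useless (as you note), and the $n$-fold--type bounds that do exploit the block structure yield $\|z\|_1\le (d\Delta)^{O(d)}$, not block-support $O(d)$; enumerating that many active blocks already costs $|P|^{(d\Delta)^{O(d)}}$. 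Even granting block sparsity, your claimed combining-DP radius $O(d\Delta)$ is unjustified: $O(d)$ per-block aggregate corrections that sum to a vector of norm $O(d\Delta)$ can individually be large, so their partial sums along any block order need not stay in a small ball. Controlling exactly those partial sums is what the Steinitz reordering accomplishes, and it is the piece your outline is missing.
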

Our algorithm is based on an approach that Eisenbrand and Weismantel~\cite{eisenbrand2019proximity}.
introduced, where they use the Steinitz Lemma for reducing the search space in integer programming.

We note that Jansen et al.~\cite{JansenKPT24} also used a tailored integer programming algorithm to obtain
their result.
There are similarities to our ILP algorithm, which is partly inspired by it.
The method in~\cite{jansen2023integer} also reduces the search space, but
via a divide-and-conquer approach due to Jansen and Rohwedder~\cite{jansen2023integer} rather than the Steinitz Lemma.
It is the author's impression that this method may also be able to produce a guarantee similar to \Cref{thm:ilp},
but since it is not stated in a generic way, we cannot easily verify this and use it as a black box.
It seems that the approach in~\cite{JansenKPT24} suffers from significantly more technical complications than ours.
Our proof is arguably more accessible and compact.

An important aspect in the line of work on FPT algorithms for Makespan Minimization is
\emph{high-multiplicity encoding}. Since the number of possible processing times is bounded, one can
encode an instance efficiently by storing $d$ processing times $p_1,\dotsc,p_d$ and a multiplicity $n_1,\dotsc,n_d$.
Semantically, this means that there are $n_i$ jobs with processing time $p_i$. The encoding
can be much more compact than encoding $n$ many processing times explicitly. In fact, the difference
can be exponential and therefore obtaining a polynomial running time in the high-multiplicity encoding length
can be much more challenging than in the natural encoding. 
Our algorithm can easily be implemented in time $p_{\max}^{O(d)} \langle \mathrm{enc} \rangle^{O(1)}$,
when given an input in high-multiplicity encoding of length $\langle \mathrm{enc} \rangle$.
Alternatively, a preprocessing based on a continuous relaxation and proximity results
can be used to reduce $n$ sufficiently and apply our algorithm as is, see~\cite{brinkop2024}.
For readability, we use the natural instance encoding throughout most of this document.

\paragraph{Other related work.}
The special case of Multiway Partitioning where $m = 2$ is exactly the classical Subset Sum
problem. This problem has received considerable attention regarding the maximum item size
as a parameter lately. Note that in contrast to the other mentioned problems, Subset Sum is only weakly
NP-hard and admits algorithms pseudopolynomial in the number of items and the maximum size.
Optimizing this polynomial (also in the more general Knapsack setting) has been subject of a series of recent works,
see~\cite{polak2021knapsack, jin20240, chen2024faster, bringmann2024knapsack}.

It is natural to ask whether Makespan Minimization on Uniform Machines (or any of the previously mentioned variants)
admits an FPT algorithm only in parameter
$d$ (and not $p_{\max}$). In the identical machine case, this depends on the encoding type.
For high-multiplicity encoding there is a highly non-trivial XP algorithm due to Goemans and Rothvoss~\cite{goemans2020polynomiality}, that is, an algorithm with
running time $\langle\mathrm{enc}\rangle^{f(d)}$, and it is open whether an FPT algorithm exists.
For natural encoding the result of Goemans and Rothvoss directly implies an FPT algorithm,
see~\cite{KouteckyZ20}.
For uniform machines, the problem is $W[1]$-hard in both encodings, even under substantial
additional restrictions, as shown by Kouteck\'y and Zink~\cite{KouteckyZ20}.

\paragraph{Overview of techniques.}
Key to our result is showing and using the perhaps surprising fact that
feasibility of a certain integer programming formulation is sufficient
for feasibility of Multiway Partitioning. In essence, this model relaxes the load constraints for machines
with large values of $T_i$, requiring only congruence modulo $a$ for a particular choice of $a\in \N$.
We refer to \Cref{sec:main} for details.
It is not trivial to see that the model can be solved in the given time.
We achieve this via a new algorithm for Multi-Choice Integer Programming, see~\Cref{sec:ilp},
that we then use in~\Cref{sec:appl} to solve our model for Multiway Partition.
The result transfers to Makespan Minimization on Uniform Machines by a straight-forward reduction.
Finally, we sketch how to adapt the algorithm to high-multiplicity encoding in \Cref{sec:high-mult}.

\section{Modulo Integer Programming Formulation}\label{sec:main}
Our model uses a pivot element $a \in \{p_1,\dotsc,p_n\}$.
The selection of $a$ is intricate as its definition is based on the unknown solution to the problem.
We can avoid this issue by later attempting
to solve the model for each of the $d$ possible choices of~$a$.

A machine $i \in \{1,\dotsc,m\}$ is called small if $T_i < p_{\max}^4$ and big otherwise. We denote the set of small machines
by $S$ and the big machines by $B = \{1,\dotsc,m\} \setminus S$.
Define \emph{\mIP} as the following mathematical system:
\begin{align}
	\sum_{j=1}^n p_j x_{ij} &= T_i &\text{ for all } i\in S \label{eq:mod-small} \\
	\sum_{j=1}^n p_j x_{ij} &\equiv T_i \mod a &\text{ for all } i\in B \label{eq:mod-big} \\
	%\sum_{j=1}^n p_j \cdot (1 - x_{ij}) &\equiv 0 \mod a &\\
	\sum_{j : p_j = a}\sum_{i\in B} x_{ij} &\ge p_{\max}^2 \cdot |B| & \label{eq:mod-pivot} \\
	\sum_{i=1}^m x_{ij} &= 1 &\text{ for all } j=1,\dotsc,n \label{eq:mod-assign} \\
	x_{ij} &\in \{0, 1\} &\text{ for all } j=1,\dotsc,n,\ i=1,\dotsc,m \notag
\end{align}
Here,~\eqref{eq:mod-assign} guarantees that the solution is an assignment of jobs to machines, encoded
by binary variables $x_{ij}$.
Constraint~\eqref{eq:mod-small} forces the machine load of small machines to be correct.
Instead of requiring this also for big machines, \eqref{eq:mod-big} only guarantees the correct
load modulo $a$.
Furthermore, we require that a sufficient number of jobs with processing time $a$ are assigned
to the big machines. There always exists a pivot element, for
which this system is feasible. We defer the details to \Cref{sec:appl} and dedicate the rest of this section to
proving that
any feasible solution for \mIP{} can be transformed efficiently into a feasible solution to the original problem.
In particular, feasibility of \mIP{} implies feasibility of the original problem,
regardless of the choice of $a$.

\subsection{Algorithm}\label{sec:main-alg}
\paragraph{Phase I.} Starting with the solution for \mIP{}, from each big machine we remove all jobs of processing time
$a$. Furthermore, as long as there a processing time $b\neq a$ such that at least $a$ many
jobs of size $b$ are assigned to the same big machine $i\in B$, we remove $a$ many of these jobs from $i$.
Note that both of these operations maintain Constraint~\eqref{eq:mod-big}.

However, Constraint~\eqref{eq:mod-assign} will be temporarily violated, namely some jobs are not assigned.
After the operations have been performed exhaustively, there are at most $(d-1)(a - 1) \le p_{\max}^2$ jobs on each
big machine $i$ and, using the definition of big machines, it follows that their total processing time is less than $T_i$.

\paragraph{Phase II.} We now assign back the jobs that we previously removed.
First, we take each bundle of $a$ many jobs with the same processing time that we had removed together earlier.
In a Greedy manner we assign the jobs of each bundle together to some big machine $i$, where they can be 
added without exceeding $T_i$. As we will show in the analysis, there always exists such a machine.

\paragraph{Phase III.} Once all bundles are assigned, we continue with the jobs with processing time $a$. We individually assign
them Greedily to big machines $i$, where they do not lead to exceeding $T_i$.

\subsection{Analysis}
\begin{lemma}\label{lem:bundles}
	Let $z_{ij}$ be the current assignment at some point during Phase~II.
	Then there is a big machine $i\in B$ with
	\begin{equation*}
		\sum_{j=1}^n p_j z_{ij} \le T_i - p_{\max}^2 \ .
	\end{equation*}
	In particular, adding any bundle to $i$ will not exceed $T_i$.
\end{lemma}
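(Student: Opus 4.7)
The plan is a short counting argument on the total slack across the big machines. Let $L_i = \sum_{j=1}^n p_j z_{ij}$ denote the current load on big machine $i$ during Phase~II, let $R_a$ be the number of jobs of processing time $a$ that Phase~I removes from the big machines, and let $W_B$ be the total processing weight of the bundles removed in Phase~I. Because~\eqref{eq:mod-assign} assigns every job exactly once and~\eqref{eq:mod-small} forces equality $L_i = T_i$ on every small machine, the balance condition $\sum_j p_j = \sum_i T_i$ implies that the initial mod-IP solution places a total weight of exactly $\sum_{i\in B} T_i$ on the big machines. Consequently, immediately after Phase~I the total load on big machines is
\begin{equation*}
    \sum_{i\in B} L_i \;=\; \sum_{i\in B} T_i - R_a\cdot a - W_B,
\end{equation*}
and during Phase~II we only re-insert (a subset of) the bundles, so at any intermediate stage $\sum_{i\in B} L_i \le \sum_{i\in B} T_i - R_a\cdot a$.

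Next I would invoke constraint~\eqref{eq:mod-pivot}, which is precisely $R_a \ge p_{\max}^2\cdot |B|$. Combined with $a\ge 1$, the total slack on the big machines satisfies
\begin{equation*}
    \sum_{i\in B}\bigl(T_i - L_i\bigr) \;\ge\; R_a\cdot a \;\ge\; p_{\max}^2\cdot |B|,
\end{equation*}
so by averaging there exists $i\in B$ with $T_i - L_i \ge p_{\max}^2$, which is the claimed inequality. The ``in particular'' statement then follows because a bundle consists of $a$ jobs of a common processing time $b\le p_{\max}$, and since $a\le p_{\max}$ its total weight is at most $a\cdot b\le p_{\max}^2$, which fits within the slack on the identified machine.

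The main obstacle, and the only step that is not pure arithmetic, is the identity $\sum_{i\in B}(\text{initial load}_i) = \sum_{i\in B} T_i$: the mod-IP formulation relaxes the big-machine loads to a mere congruence by~\eqref{eq:mod-big}, so individual big-machine loads need not equal $T_i$, yet their sum is pinned down by the complementary equality~\eqref{eq:mod-small} on the small machines together with the problem's total-weight balance. Once that observation is made explicit, the rest is a clean bookkeeping of what Phase~I removed and what Phase~II has restored so far, followed by pigeonhole.
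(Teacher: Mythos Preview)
Your proposal is correct and follows essentially the same approach as the paper: both arguments derive $\sum_{i\in B}(\text{initial load}_i)=\sum_{i\in B}T_i$ from~\eqref{eq:mod-small} and the global balance, observe that throughout Phase~II the jobs of size~$a$ remain unassigned so the total slack on big machines is at least $R_a\cdot a\ge p_{\max}^2|B|$ via~\eqref{eq:mod-pivot}, and then average. Your bound on the bundle weight ($a$ jobs of size at most $p_{\max}$, hence weight $\le p_{\max}^2$) is in fact stated slightly more accurately than in the paper, which writes ``at most $a-1$ jobs'' despite bundles having exactly $a$ jobs.
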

\begin{proof}
	Let $x_{ij}$ be the initial solution to \mIP{}, from which we derived $z_{ij}$.
	Recall that by problem definition we have $p_1 + \cdots + p_n = T_1 + \cdots + T_m$. Further,
	it holds that
	\begin{equation*}
		\sum_{i\in S} \sum_{j=1}^n p_j x_{ij} = \sum_{i\in S} T_i \ .
	\end{equation*}
	Together, these statements imply that
	\begin{equation*}
		\sum_{i\in B} \sum_{j=1}^n p_j x_{ij} = \sum_{i\in B} T_i \ .
	\end{equation*}
	Consider the operations on $x_{ij}$ that led to $z_{ij}$ and focus on one particular job $j$.
	This jobs $j$ was either removed from some big machine and added back to another big machine,
	which does not change the left-hand side of the equation above; or its assignment did not change,
	which also does not affect the left-hand side; or it was removed without being added back,
	which decreases the left-hand side by $p_j$. The latter is the case at least for the jobs
	with processing time equal to~$a$. It follows that
	\begin{equation*}
		\sum_{i\in B} \sum_{j=1}^n p_j z_{ij} \le \sum_{i\in B} T_i - \sum_{i\in B}\sum_{j : p_j = a} p_j x_{ij} \le \sum_{i\in B} T_i - p_{\max}^2 \cdot |B| \ .
	\end{equation*}
	The last inequality follows from Constraint~\eqref{eq:mod-pivot}.
	It follows that there is a big machine $i\in B$ with
	\begin{equation*}
		\sum_{j=1}^n p_j z_{ij} \le \sum_{i\in B} T_i - p_{\max}^2 \ .
	\end{equation*}
	Since each bundle consists of at most $a - 1 \le p_{\max}$ jobs with processing time at most $p_{\max}$,
	adding this bundle to $i$ will not exceed $T_i$.
\end{proof}
\begin{lemma}
	Let $z_{ij}$ be the current assignment at some point during Phase~III, but before all jobs have been assigned back.
	Then there is a big machine $i\in B$ with
	\begin{equation*}
		\sum_{j=1}^n p_j z_{ij} \le T_i - a \ .
	\end{equation*}
	In particular, a job with processing time $a$ can be added to $i$ without exceeding $T_i$. 
\end{lemma}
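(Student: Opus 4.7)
The plan is to show two invariants throughout the algorithm and then combine them. The first invariant is that every intermediate assignment $z_{ij}$ satisfies the modular condition $\sum_{j=1}^n p_j z_{ij} \equiv T_i \pmod{a}$ for all $i\in B$. This holds initially by~\eqref{eq:mod-big}, and each modification preserves it: in Phase~I we only remove single jobs of processing time $a$ or bundles of $a$ jobs of a common processing time $b$, and in both cases the change in load is a multiple of $a$; in Phases~II and III we add the very same units back. The second invariant, which is the analogue of the computation in \Cref{lem:bundles}, is a bookkeeping of the total load on big machines.

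Fix a moment during Phase~III and let $k\ge 1$ denote the number of jobs of processing time $a$ that have not yet been reassigned. Since Phase~II has completely finished, every previously removed bundle has been put back, so the only mass missing from the big machines compared to the initial solution consists of these $k$ unassigned jobs of size $a$. Starting from $\sum_{i\in B}\sum_{j=1}^n p_j x_{ij} = \sum_{i\in B} T_i$ (established in the proof of \Cref{lem:bundles}) and tracking how each operation changes the left-hand side exactly as in that proof, we obtain
\begin{equation*}
	\sum_{i\in B}\sum_{j=1}^n p_j z_{ij} = \sum_{i\in B} T_i - k\cdot a < \sum_{i\in B} T_i \ .
\end{equation*}
Hence there exists a big machine $i\in B$ with $\sum_j p_j z_{ij} < T_i$, i.e.\ the slack $T_i - \sum_j p_j z_{ij}$ is strictly positive.

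To finish, we combine this with the modular invariant: since $\sum_j p_j z_{ij} \equiv T_i \pmod{a}$, the slack is a nonnegative integer multiple of $a$, and being strictly positive it is at least $a$. Thus $\sum_j p_j z_{ij} \le T_i - a$, so assigning a single job of processing time $a$ to $i$ keeps the load within $T_i$. The only subtlety to check carefully is the second invariant, i.e.\ confirming that the bundle-based reshuffling in Phases~I and II does not leak any load away from $B$ other than through the $k$ size-$a$ jobs still pending in Phase~III; this is a routine case analysis analogous to the one used in \Cref{lem:bundles}, and I do not expect any real obstacle there.
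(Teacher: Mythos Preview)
Your proposal is correct and follows essentially the same approach as the paper: an averaging over $\sum_{i\in B} T_i$ to find a big machine with strictly positive slack, combined with the modular invariant to upgrade the slack to at least $a$. The paper's proof is slightly terser (it just notes $\sum_{i\in B}\sum_j p_j z_{ij} < \sum_{i\in B} T_i$ rather than your exact $-k a$), but the logic is identical.
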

\begin{proof}
	With the same argument as in the proof of \Cref{lem:bundles} it follows that
	\begin{equation*}
		\sum_{i\in B} \sum_{j=1}^n p_j z_{ij} < \sum_{i\in B} T_i \ .
	\end{equation*}
	Here, we do not have the same gap as in \Cref{lem:bundles}, since some jobs of size $a$ might already be
	assigned to big machines, but we still have strict inequality, since not all jobs are assigned.
	Thus, there is a big machine $i\in B$ with $\sum_{j=1}^n p_j z_{ij} < T_i$.
	Notice that $\sum_{j=1}^n p_j z_{ij} \equiv T_i \mod a$, since the initial solution $x_{ij}$ for
	\mIP{}, from which $z_{ij}$ was derived, satisfies this and all operations we perform only add or subtract
	a multiple of $a$ from the machine load.
	It follows that
	\begin{equation*}
		\sum_{j=1}^n p_j z_{ij} \le T_i - a \ . \qedhere
	\end{equation*}
\end{proof}

\begin{theorem}
	Given a feasible solution to \mIP, the procedure described in \Cref{sec:main-alg} constructs a feasible
	solution to Multiway Partitioning in time polynomial in $n$.
\end{theorem}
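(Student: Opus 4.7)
My plan is to combine the two preceding lemmas with a load-conservation argument. Concretely, I would show (i) the greedy Phases~II and~III never get stuck, so every job is eventually assigned, and (ii) the total load that lands on the big machines equals $\sum_{i\in B} T_i$, which together with the upper bounds enforced by the greedy rule forces each big-machine load to equal exactly $T_i$. Small machines need no separate argument: they are not touched by any of the three phases, so by~\eqref{eq:mod-small} their loads remain equal to $T_i$ throughout.

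First I would argue termination. By \Cref{lem:bundles}, at every step of Phase~II there is a big machine with slack at least $p_{\max}^2$, which comfortably accommodates a bundle of total size at most $(a-1)p_{\max} \le p_{\max}^2$; a straightforward induction on the number of remaining bundles then shows that Phase~II ends with every bundle placed. The same induction, now using the second lemma, shows that Phase~III ends with every pivot job placed. Hence the final assignment $(z_{ij})$ is a function from jobs to machines, i.e.\ Constraint~\eqref{eq:mod-assign} holds.

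Next I would argue exact load. Since Phase~I only moves jobs off big machines and Phases~II and~III only move jobs onto big machines, the multiset of jobs occupying big machines at the end coincides with the multiset occupying big machines under the initial solution $(x_{ij})$. Using the identity $\sum_{i\in B}\sum_j p_j x_{ij} = \sum_{i\in B} T_i$ already derived inside the proof of \Cref{lem:bundles}, this gives
\begin{equation*}
    \sum_{i\in B}\sum_{j=1}^n p_j z_{ij} \;=\; \sum_{i\in B} T_i.
\end{equation*}
Because the greedy placement rule in Phases~II and~III never creates a load exceeding $T_i$, equality at the total-sum level forces $\sum_j p_j z_{ij} = T_i$ for every $i\in B$. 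Together with the unchanged small machines, this means $(z_{ij})$ is feasible for Multiway Partitioning.

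The only genuinely subtle point is ruling out a premature stall in Phase~III, and that is exactly what the second lemma, powered by the congruence~\eqref{eq:mod-big}, delivers: as long as an unplaced pivot job remains, the surviving slack on some big machine is not only positive but at least $a$. The running time is routine: Phase~I can be carried out in $O(nm)$ by bucketing jobs per big machine by processing time, and Phases~II and~III together perform at most $O(n)$ placements, each found by an $O(m)$ scan, giving overall $\poly(n)$ time as claimed.
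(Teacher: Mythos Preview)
Your proposal is correct and follows essentially the same approach as the paper: invoke the two lemmas to guarantee that Phases~II and~III terminate with every job assigned and each machine's load at most $T_i$, then use the global load identity to turn all inequalities into equalities. The paper does this last step slightly more directly, using $\sum_j p_j = \sum_i T_i$ on all machines at once rather than your multiset-preservation argument restricted to big machines, but the two are equivalent. (Minor arithmetic quibble: a bundle consists of exactly $a$ jobs, not $a-1$, so its total size is at most $a\cdot p_{\max}\le p_{\max}^2$; the conclusion is unaffected.)
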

\begin{proof}
	By the previous Lemmas the algorithm succeeds in finding an assignment $z_{ij}$ where each machine $i$
	has load $\sum_{j=1}^n p_j z_{ij} \le T_i$. Since
	$p_1 + \cdots + p_n = T_1 + \cdots + T_m$ equality must hold for each machine.
	The polynomial running time is straight-forward due to the Greedy nature of the algorithm.
\end{proof}

\section{Multi-Choice Integer Programming}
\label{sec:ilp}
This section is dedicated to proving \Cref{thm:ilp}.
We refer to \Cref{sec:introduction} for the definition Multi-Choice Integer Programming.

\subsection{Algorithm}\label{sec:steinitz-alg}
Let $t = \sum_{S\in P} t_S$.
On a high level we start with $x = 0$ and then for iterations $k=1,\dotsc,t$ increase a single variable by one.
We keep track of the right-hand side of the partial solution at all times. We do not, however, want to
explicitly keep track of the current progress $\sum_{i\in S} x_i$ for each $S\in P$.
Instead, we fix in advance, which set we will make progress on in each iteration, ensuring that for each $S\in P$
there are exactly $t_S$ iterations corresponding to it.
Further, we want to make sure that all sets progress in a balanced way, which will later help bound the number
of right-hand sides we have to keep track of.
For intuition, we think of a continuous time $[0, 1]$. At $0$ all variables are zero; at $1$ all
sets are finalized, that is, $\sum_{i\in S} x_i = t_S$ for each $S\in P$.
For a set $S\in P$ we act at the breakpoints $1/t_S, 2/t_S,\dotsc,t_S/t_S$.
This almost defines a sequence of increments, except that some sets may share the same breakpoints, in which
case the order is not clear.
We resolve this ambiguity in an arbitrary way.
Let $S_1,\dotsc,S_t \in P$ be the resulting sequence and $d_1,\dotsc,d_t$ the corresponding breakpoints.
Formally, we require that $d_1\le \cdots \le d_t$ and for each $S\in P$ and each $i=1,\dotsc,t_S$ there
is some $k \in \{1,\dotsc,t\}$ such that $S_k = S$ and $d_k = i/t_S$.

We now model Multi-Choice Integer Programming as a path problem in a layered graph. There are $t$ sets
of vertices $V_1,\dotsc,V_{t+1}$. The vertices $V_k$ correspond to right-hand sides $b'\in \Z^d$,
which stand for a potential right-hand side generated by the partial solution constructed in iterations $1,\dotsc,k-1$.
Formally,
$V_k$ contains one vertex for every $b'\in \Z^d$ with $\|b' - d_k \cdot b \|_{\infty}\le 4 d\Delta |P|$.
Let $v'\in V_k$ and $v'' \in V_{k+1}$ and let $b', b'' \in \Z^d$ be the corresponding right-hand sides.
There is an edge from $v'$ to $v''$ if there is some $i\in S_k$ with $A_i = b'' - b'$.
Intuitively, choosing this edge corresponds to increasing $x_i$ by one. The weight of the edge is $c_i$,
or the maximum such value if there are several $i\in S_k$ with $A_i = b'' - b'$.

We solve the longest path problem in the graph above from the $0$-vertex of $V_0$ to the $b$-vertex of $V_{t+1}$.
Since the graph is a DAG, this can be done in polynomial time in the number of vertices of the graph, which
is polynomial in $(8 d\Delta |P| + 1)^d \cdot (t+1)$.
From this path we derive the solution $x$ by incrementing the variable corresponding to each edge, as described above.

\subsection{Analysis}
It is straight-forward that given a path of weight $C$ in the graph above, we obtain a feasible solution
of value $C$ for Multi-Choice Integer Programming.
However, because we restrict the right-hand sides allowed in $V_1,\dotsc,V_{t+1}$
it is not obvious that the optimal solution corresponds to a valid path.
In the remainder, we will prove this.
\begin{lemma}\label{lem:steinitz-path}
Given a solution $x$ of value $c\T x$ for Multi-Choice Integer Programming, there exists a path of 
	the same weight $c\T x$ in the graph described in \Cref{sec:steinitz-alg}.
\end{lemma}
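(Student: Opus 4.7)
The plan is to apply the Steinitz Lemma within each block $S \in P$ separately. View the feasible solution $x$ as a multiset of $t$ unit increments (with $x_i$ copies of variable $i$), partitioned by $P$ into blocks of size $t_S$. For each block define the average direction $u_S := \frac{1}{t_S}\sum_{i \in S} x_i A_i \in \mathbb{R}^d$, so the centered vectors $\{A_i - u_S\}$ associated with $S$'s increments sum to zero and have $\ell_\infty$-norm at most $2\Delta$. The standard $\ell_\infty$ Steinitz Lemma (losing a factor $d$) then orders them as $i_1^S, \dots, i_{t_S}^S$ with
\begin{equation*}
\Bigl\| \sum_{l=1}^{j} A_{i_l^S} - j\, u_S \Bigr\|_\infty \le 2 d \Delta \quad \text{for every } j = 0, 1, \dots, t_S.
\end{equation*}

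I would then interleave these within-block orderings into a single global sequence by following the predetermined schedule $S_1, \dots, S_t$: at step $k$, advance block $S_k$ by taking the next element of its Steinitz ordering. This automatically ensures the $k$-th edge lies in $S_k$, and since each increment contributes its own coefficient $c_i$, the resulting walk has total weight $c^{\T} x$ (the max rule on parallel edges can only help). The remaining task is to verify that each intermediate right-hand side $b'_k$ lies in $V_k$, i.e.\ that $\|b'_k - d_k b\|_\infty \le 4 d \Delta |P|$.

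For that verification, let $j_{S'}(k)$ denote the number of $S'$-increments completed before step $k$. By the definition of the breakpoints, $|j_{S'}(k) - d_k t_{S'}| \le 1$ for every $S' \in P$. Using $b = \sum_{S'} t_{S'} u_{S'}$, one decomposes
\begin{equation*}
b'_k - d_k b = \sum_{S' \in P} \Bigl( \sum_{l=1}^{j_{S'}(k)} A_{i_l^{S'}} - j_{S'}(k)\, u_{S'} \Bigr) + \sum_{S' \in P} \bigl( j_{S'}(k) - d_k t_{S'}\bigr)\, u_{S'}.
\end{equation*}
The first sum has $\ell_\infty$-norm at most $2 d \Delta |P|$ by the block-wise Steinitz bound; the second is bounded by $|P| \Delta$ since $\|u_{S'}\|_\infty \le \Delta$. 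Their total fits comfortably in $4 d \Delta |P|$. The main obstacle is precisely this combined error bound: aligning the Steinitz factor $d$ with the $|P|$-fold summation across blocks and the unit breakpoint rounding so that everything lands inside the radius hardcoded into the graph. Once those two error sources are pinned down separately, the constructed walk witnesses the path claimed in the lemma.
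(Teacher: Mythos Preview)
Your proposal is correct and follows essentially the same approach as the paper: apply the Steinitz Lemma within each block $S$ to the centered vectors $A_i - u_S$ (the paper writes $u_S = Ax^{(S)}/t_S$), interleave the resulting orderings according to the predetermined schedule $S_1,\dots,S_t$, and bound $\|b'_k - d_k b\|_\infty$ by summing the per-block Steinitz error $2d\Delta$ and the rounding error from $|j_{S'}(k)-d_k t_{S'}|\le 1$ across all $|P|$ blocks. Your two-term decomposition is in fact slightly cleaner than the paper's four-term triangle-inequality chain, but the idea and the resulting $4d\Delta|P|$ bound are the same.
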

This crucially relies on the following result.
\begin{proposition}[Steinitz Lemma~\cite{sevast1978approximate}, see also~\cite{eisenbrand2019proximity}]
	Let $\| \cdot \|$ be an arbitrary norm.
	Let $d\in\N$ and $v_1,\dotsc,v_n\in \R^d$ with $v_1 + \cdots + v_n = 0$ and $\| v_i \| \le 1$ for all $i=1,\dotsc,n$.
	Then there exists a permutation $\sigma \in \mathcal S_n$ such that for all $i = 1,\dotsc,n$ it holds that
	\begin{equation*}
		\| v_{\sigma(1)} + \cdots + v_{\sigma(i)} \| \le d \ .
	\end{equation*}
\end{proposition}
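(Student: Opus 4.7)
My plan is to prove the Steinitz Lemma by greedily constructing the permutation $\sigma$ one element at a time, maintaining the invariant that each partial sum $s_k := v_{\sigma(1)} + \cdots + v_{\sigma(k)}$ satisfies $\|s_k\| \le d$. Starting from $s_0 = 0$, which trivially meets the bound, I would at each step $k \ge 1$ select the next index $\sigma(k)$ from the remaining set $R_{k-1} := \{1,\dotsc,n\} \setminus \{\sigma(1),\dotsc,\sigma(k-1)\}$ so that $\|s_{k-1} + v_{\sigma(k)}\| \le d$. Once such a selection is always possible, iterating from $k=1$ up to $k=n$ immediately yields the desired permutation.

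The central identity driving the argument is that, by the hypothesis $v_1 + \cdots + v_n = 0$, the remaining vectors satisfy $\sum_{j \in R_{k-1}} v_j = -s_{k-1}$. Hence the point $-s_{k-1}/|R_{k-1}|$ is the centroid of $\{v_j : j \in R_{k-1}\}$ and lies in their convex hull. By Carath\'eodory's theorem, this centroid can be expressed as a convex combination of at most $d+1$ of the remaining vectors. The remaining task is to show that within such a distinguished subset of size at most $d+1$, there exists at least one index $j$ with $\|s_{k-1} + v_j\| \le d$, i.e., a single vector (not just an average of vectors) satisfying the norm bound.

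The main obstacle is precisely this extraction step: a convex combination of vectors can easily have smaller norm than each of its constituents, so the Carath\'eodory bound by itself is not enough. To overcome this, I would strengthen the inductive invariant beyond the scalar bound $\|s_k\| \le d$, tracking additionally that $s_k$ is represented as a basic feasible solution of a suitable LP relaxation on $[0,1]^{R_{k-1}}$ (choose weights $\lambda_j \in [0,1]$ with $\sum \lambda_j v_j = -s_{k-1}$; the constraint matrix has $d$ rows, so a basic feasible solution has at most $d$ fractional coordinates and the other $|R_{k-1}| - d$ coordinates are in $\{0,1\}$). An exchange argument among adjacent basic feasible solutions, together with $\|s_{k-1}\| \le d$ and $\|v_j\| \le 1$, allows rounding one fractional coordinate to $1$ and absorbing the corresponding vector into the permutation while preserving both the invariant and the basic feasible structure.

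Once the key step is established, the whole proof follows by straightforward induction on $k$. I would then state and invoke Carath\'eodory's theorem explicitly as the only external ingredient, keeping the argument self-contained apart from that standard fact. The hardest part by far is the exchange/basic-feasible-solution step outlined above; the rest is bookkeeping.
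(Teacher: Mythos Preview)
The paper does not prove the Steinitz Lemma; it is quoted as a Proposition with references to \cite{sevast1978approximate} and \cite{eisenbrand2019proximity} and then used as a black box in the proof of \Cref{lem:steinitz-path}. There is therefore no ``paper's own proof'' to compare your attempt against.

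On your outline itself: the broad strategy --- maintain a linear-programming (basic feasible solution) invariant while peeling off one vector at a time --- is indeed how the classical Grinberg--Sevast'yanov argument obtains the bound $d$. But what you have written is a plan rather than a proof, and the gap sits exactly where you place it. Your opening Carath\'eodory idea does not help, as you acknowledge: writing $-s_{k-1}/|R_{k-1}|$ as a convex combination of at most $d+1$ remaining vectors gives no control over $\|s_{k-1}+v_j\|$ for any individual~$j$. Your proposed remedy, ``an exchange argument among adjacent basic feasible solutions'', is not carried out, and with the invariant you actually state (only $\sum_{j\in R_{k-1}}\lambda_j v_j=-s_{k-1}$, no cardinality constraint) it is unclear why rounding a fractional $\lambda_j$ to~$1$ and removing $j$ would keep $\|s_{k-1}+v_j\|\le d$, or why the LP on the smaller index set would again be feasible with the new right-hand side. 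The standard argument carries the additional equality $\sum_j\lambda_j=k-d$ over the already-chosen index set $\{\sigma(1),\dots,\sigma(k)\}$ and builds the chain from $k=n$ downward; it is this extra constraint that makes the vertex structure do real work in the peel-off step. Until you actually supply that step, the proposal remains incomplete at its acknowledged crux.
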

\begin{proof}[Proof of \Cref{lem:steinitz-path}]
	Consider first one set $S\in P$.
	Let 
	\begin{equation*}
		x^{(S)}_i = \begin{cases}
			x_i &\text{ if } i\in S \\
			0 &\text{ otherwise}
		\end{cases}
	\end{equation*}
	be the solution restricted to $S$.
	For each $i\in S$ we define a vector
	\begin{equation*}
		v_i = \frac{A_i}{2\Delta} - \frac{A x^{(S)}}{2\Delta t_S} \in \R^d \ .
	\end{equation*}
	Observe that $\| v_i \|_{\infty}\le 1$ for all $i\in S$ and
	\begin{equation*}
		\sum_{i\in S} x_i v_i = \frac{1}{2\Delta}(\sum_{i\in S} A_i x_i - \frac{A x^{(S)}}{t_S} \sum_{i\in S} x_i) = \frac{1}{2\Delta}(A x^{(S)} - A x^{(S)}) = 0 \ .
	\end{equation*}
	Thus, by the Steinitz Lemma we can find a bijection $\sigma_S : \{1,\dotsc,t_S\} \rightarrow S$ such that
	\begin{equation*}
		\| v_{\sigma_S(1)} + \cdots + v_{\sigma_S(i)} \|_{\infty} \le d \quad \text{for all } i=1,\dotsc,t_S \ .
	\end{equation*}
	Using the definition of $v_i$ we obtain that
	\begin{equation*}
		\| A_{\sigma_S(1)} + \cdots + A_{\sigma_S(i)} - Ax^{(S)} \frac{i}{t_S} \|_{\infty} \le 2d\Delta \quad \text{for all } i=1,\dotsc,t_S \ .
	\end{equation*}
	Next we define the following increments: for an iteration $k\in \{1,\dotsc,t\}$ with $S := S_k$ and $d_k = i/t_S$
	we increment the variable $\sigma_S(i)$. In other words, for each set $S\in P$ we follow
	exactly the order given by $\sigma_S$.

	It remains to bound the right-hand side of each partial solution. Consider again an iteration $k\in\{1,\dotsc,t\}$.
	Let $s_S \le t_S$ be the number of increments to set $S$ that have been performed during iterations $1,\dotsc,k$. Then $s_S$ must be such that $(s_S - 1)/t_S \le d_k \le s_S/t_S$. In other words,
	\begin{equation*}
		s_S \in \{\lceil d_k t_S \rceil,\ \lfloor d_k t_S\rfloor + 1\} \ .
	\end{equation*}
	Let $x'$ be the partial solution after iteration $k$. Then using $\sum_{S\in P} Ax^{(S)} = Ax = b$ and
	several triangle inequalities, we calculate.
	\begin{align*}
		\|A x' - d_k b \|_{\infty} &= \|\sum_{S\in P} \sum_{i=1}^{s_S} A_{\sigma_S(i)} - d_k b\|_{\infty} \\
		&\le \sum_{P\in P} \|A_{\sigma_S(\lfloor d_k t_S \rfloor + 1)} \|_{\infty} + \|\sum_{S\in P} \sum_{i=1}^{\lceil d_k t_S \rceil} A_{\sigma_S(i)} - d_k b\|_{\infty} \\
		&\le |P| \cdot \Delta + \|\sum_{S\in P} \sum_{i=1}^{\lceil d_k t_S \rceil} [A_{\sigma_S(i)} - d_k \cdot Ax^{(S)}]\|_{\infty} \\
		&\le |P| \cdot \Delta + \|\sum_{S\in P} (d_k - \frac{\lceil d_k t_S \rceil}{t_S}) \cdot Ax^{(S)}\|_{\infty} \\
		&\quad + \|\sum_{S\in P} \sum_{i=1}^{\lceil d_k t_S \rceil} [A_{\sigma_S(i)} - \frac{\lceil d_k t_S \rceil}{t_S} \cdot Ax^{(S)}]\|_{\infty} \\
		&\le |P| \cdot \Delta + |P|\cdot \Delta + |P| \cdot 2d\Delta \le |P| \cdot 4d\Delta \ .
	\end{align*}
	It follows that solution $x$ can be emulated as a path $P$ in the given graph using the increment sequence
	defined above. Since the weights of the edges correspond to the values of $c$, this path has weight $c\T x$.
\end{proof}

We conclude this section by showing that the previous result extends to the case of Multi-Choice Integer
Programming with inequalities instead of equalities.
\begin{corollary}\label{cor:ilp2}
	Let $A\in \Z^{d\times n}$ with $|A_{ij}| \le \Delta$, $b\in \Z^d$, and $c\in\Z^n$.
	Let $P$ be a partition of $\{1,\dotsc,n\}$ and $t_S\in\N$ for each $S\in P$.
	In time
	\begin{equation*}
		(m \Delta |P|)^{O(m)} (n + t)^{O(1)} \ ,
	\end{equation*}
	where $t = \sum_{S\in P} t_S$, we can solve
	\begin{align*}
		\max &\ c\T x \\
		A x &\le b \\
		\sum_{i\in S} x_i &= t_S &\text{\rm for all } S\in P \\
		x &\in \ZZ^n 
	\end{align*}
\end{corollary}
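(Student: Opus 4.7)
The plan is to reduce the inequality case to the equality case of \Cref{thm:ilp} via slack variables. The delicate point is to fit the slacks into the partition structure without incurring any enumeration blow-up.

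First, I would normalize the right-hand side. Since $P$ is a partition of $\{1,\dotsc,n\}$, every feasible $x$ satisfies $\sum_{i=1}^n x_i = t$, and hence $|(Ax)_j|\le \Delta t$ for every row $j$. Therefore we may replace each $b_j$ by $\min(b_j, \Delta t)$ (for larger $b_j$ the inequality is vacuous) and declare infeasibility whenever $b_j < -\Delta t$. After this preprocessing $|b_j|\le \Delta t$, so the slack $b_j - (Ax)_j$ of any feasible solution lies in $[0, 2\Delta t]$ and its sum over all rows is at most $2d\Delta t$.

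Next, I would introduce $d+1$ auxiliary variables $s_0, s_1,\dotsc, s_d$, all of cost $0$. The column of $s_j$ is $e_j$ for $j\ge 1$, while $s_0$ gets the zero column. The original constraint $Ax\le b$ is then equivalent to the equality $Ax + \sum_{j=1}^d e_j s_j = b$ together with nonnegativity of the $s_j$. I would extend the partition by a single new block $S^\star = \{s_0, s_1,\dotsc, s_d\}$ with target $t_{S^\star} := 2d\Delta t$. The key role of $s_0$, which does not appear in any constraint, is to absorb whatever budget in $t_{S^\star}$ is not used by $s_1,\dotsc,s_d$; this is what lets $t_{S^\star}$ be fixed a priori rather than enumerated. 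Equivalence is immediate in both directions: a feasible $x$ lifts to $s_j := b_j - (Ax)_j\ge 0$ for $j\ge 1$ and $s_0 := t_{S^\star} - \sum_{j\ge 1} s_j \ge 0$ (the nonnegativity of $s_0$ follows from the sum bound above), while conversely every feasible solution of the equality instance satisfies $s_j\ge 0$ and hence $Ax\le b$.

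Finally, I would apply \Cref{thm:ilp} to the reformulated instance, which has $n' = n + d + 1$ variables, $d$ rows, the same $\Delta$, partition size $|P|+1$, and total target $t' \le t + 2d\Delta t$. This yields running time $(d\Delta(|P|+1))^{O(d)}(n + t + d\Delta t)^{O(1)}$, which collapses to the claimed $(d\Delta|P|)^{O(d)}(n+t)^{O(1)}$ after absorbing the polynomial factor $(d\Delta)^{O(1)}$ into the exponential term. I expect the only delicate step to be verifying that the fixed choice $t_{S^\star} = 2d\Delta t$ is simultaneously large enough (so every feasible $x$ can be lifted) and polynomially bounded (so the running time does not degrade); the normalization in the first step is precisely what makes this possible.
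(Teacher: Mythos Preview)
Your proof is correct and follows the same overall idea as the paper: reduce inequalities to equalities by slack variables and invoke \Cref{thm:ilp}. The only structural difference is in how the slacks are fitted into the partition. The paper introduces, for each row $j$, a pair $(s_j,\bar s_j)$ forming its own new block with target $2\Delta t$ (so $|P|$ grows by $d$, $n$ by $2d$, and $t$ by $2d\Delta t$), whereas you put all slacks $s_1,\dotsc,s_d$ together with one dummy absorber $s_0$ into a single new block with target $2d\Delta t$ (so $|P|$ grows by $1$, $n$ by $d+1$, and $t$ by $2d\Delta t$). Both organizations are valid and yield the same asymptotic running time; your version is slightly more economical with the partition, while the paper's per-row blocks avoid the need for the global bound $\sum_j (b_j-(Ax)_j)\le 2d\Delta t$ and only use the per-row bound $b_j-(Ax)_j\le 2\Delta t$.
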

\begin{proof}
We reduce to \Cref{thm:ilp} by adding slack variables.
First, we remove all trivial constraints. If $b_j \ge \Delta t$ then the corresponding constraint cannot be
violated. 
For each row $j$ of $A$ we add two variables $s_j, \bar s_j$, which
form a new set in the partition $P$ with required cardinality $s_j + \bar s_j = 2 t\Delta$.
We add $s_j$ to the left-hand side of $j$th inequality and replace it by an inequality.
Given a solution $x$ to the ILP with inequalities, we can set $\bar s_j = (b - Ax)_j$
and $s_j = 2t\Delta - s_j \ge 0$ for $j=1,\dotsc,m$ to obtain a feasible solution for the ILP with
equalities. For a feasible solution for the ILP with equalities, the same settings of variables $x$
is also feasible for the ILP with inequalities. Hence, both ILPs are equivalent.
The transformation increases $|P|$ by $d$, $n$ by $2d$, and $t$ by $2t\Delta$.
These changes do not increase the running time bound asymptotically.
\end{proof}

\section{Main Result}
\label{sec:appl}
We first need to verify that \mIP{} is indeed feasible for some choice of $a$.
\begin{lemma}
	Given a feasible instance of Multiway Partitioning, there exists a pivot $a\in \{p_1,\dotsc,p_n\}$
	such that \mIP{} is feasible.
\end{lemma}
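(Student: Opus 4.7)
The plan is to exhibit, for any feasible Multiway Partitioning instance, a concrete pivot $a$ together with a feasible $x$ for \mIP{}. The natural candidate for $x$ is the indicator vector of a feasible partition $A_1 \dot\cup \cdots \dot\cup A_m$ itself. This $x$ automatically satisfies the assignment constraint~\eqref{eq:mod-assign}, and since the load on each machine equals $T_i$ exactly, both~\eqref{eq:mod-small} and~\eqref{eq:mod-big} hold regardless of how $a$ is chosen (an exact equality trivially implies the corresponding congruence). All that remains is to select $a \in \{p_1,\dotsc,p_n\}$ so that the pivot-abundance constraint~\eqref{eq:mod-pivot} is also satisfied.

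For this step I would argue by pigeonhole on the jobs placed on big machines in the fixed feasible partition. By definition each big machine has $T_i \ge p_{\max}^4$, and each job contributes at most $p_{\max}$ to the load. Hence every big machine $i \in B$ receives at least $T_i / p_{\max} \ge p_{\max}^3$ jobs, so the total number of jobs assigned to big machines is at least $p_{\max}^3 \cdot |B|$. Since there are at most $d \le p_{\max}$ distinct processing times, some value $a \in \{p_1,\dotsc,p_n\}$ must occur at least $p_{\max}^3 \cdot |B| / p_{\max} = p_{\max}^2 \cdot |B|$ times among those jobs. Taking this $a$ as the pivot makes~\eqref{eq:mod-pivot} hold and yields the desired feasible solution.

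The only mildly subtle point is verifying that the thresholds line up: the definition $T_i \ge p_{\max}^4$ of ``big'' was calibrated precisely so that the per-machine job-count lower bound $p_{\max}^3$, once divided by the worst-case number $d \le p_{\max}$ of distinct sizes, still exceeds the $p_{\max}^2$ abundance demanded by~\eqref{eq:mod-pivot}. Once this calibration is observed, the argument is a straightforward double count and no further machinery is needed.
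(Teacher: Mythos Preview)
Your proposal is correct and matches the paper's proof essentially step for step: take the feasible partition as $x$, observe that \eqref{eq:mod-assign}, \eqref{eq:mod-small}, \eqref{eq:mod-big} hold automatically, lower-bound the number of jobs on big machines by $p_{\max}^3 \cdot |B|$ via $T_i \ge p_{\max}^4$, and pigeonhole over the $d \le p_{\max}$ processing times to find the pivot.
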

\begin{proof}
	Consider the assignment $y_{ij}$ corresponding to the solution of Multiway Partitioning.
	By definition of the problem, this assignment satisfies Constraint~\eqref{eq:mod-assign}
	and for each $i=1,\dotsc,m$
	that
	\begin{equation}
		\sum_{j=1}^n p_j y_{ij} = T_i \ . \label{eq:sol-match}
	\end{equation}
	This implies that Constraints~\eqref{eq:mod-small} and~\eqref{eq:mod-big} are satisfied, for any choice of $a$.
	Recall that each big machine $i\in B$ has $T_i \ge p_{\max}^4$. In particular, \eqref{eq:sol-match}
	implies that 
	\begin{equation}
		\sum_{j=1}^n y_{ij} \ge \frac{T_i}{p_{\max}} \ge p_{\max}^3 \quad \text{for all } i\in B \ .
	\end{equation}
	Thus,
	\begin{equation}
		\sum_{j=1}^n \sum_{i\in B} y_{ij} \ge p_{\max}^3 \cdot |B| \ .
	\end{equation}
	Thus, there exists some $a\in \{p_1,\dotsc,p_n\}$ with
	\begin{equation}
		\sum_{j : p_j = a} \sum_{i\in B} y_{ij} \ge \frac{p_{\max}^2 |B|}{d} \ge p^2_{\max} \cdot |B| \ .
	\end{equation}
	In other words, Constraint~\eqref{eq:mod-pivot} holds for this choice of $a$, which concludes the proof.
\end{proof}
We will now model the problem of solving \mIP{} as an instance
of Multi-Choice Integer Programming.
The following is a relaxation of \mIP{}:
\begin{align*}
	\sum_{j=1}^n p_j x_{ij} &= T_i &\text{ for all } i\in S \\
	\sum_{j=1}^n p_j x_{ij} &\equiv T_i \mod a &\text{ for all } i\in B \\
	\sum_{j : p_j = a}\sum_{i\in S} x_{ij} &\le |\{j \mid p_j = a\}| - p_{\max}^2 \cdot |B| & \\
	\sum_{i=1}^m x_{ij} &\le 1 &\text{ for all } j=1,\dotsc,n \\
	x_{ij} &\in \{0, 1\} &\text{ for all } j=1,\dotsc,n,\ i=1,\dotsc,m
\end{align*}
Here, we swap the constraint on jobs of size $a$ to the small machines instead of the large ones
and, more importantly, we do not require all jobs to be assigned.
This model and \mIP{} are in fact equivalent, since all jobs that are unassigned must have a total processing time
that is divisible by $a$ (because of $p_1 + \cdots + p_n = T_1 + \cdots T_m$ and the constraints). Thus,
one can derive a solution to \mIP{} by adding all unassigned jobs to an arbitrary big machine, assuming
$B\neq \emptyset$.
If, on the other hand, $B = \emptyset$ then the requirement that small machines have the correct load implies that all
jobs are assigned, making the model exactly equivalent to \mIP{}.

We can therefore focus on solving the model above, which is done with the help of the standard modeling technique 
of \emph{configurations}. 

Notice that in the model above small machines can have at most $p_{\max}^4$ jobs assigned
to each. For big machines, we may also assume without loss of generality that at most $(a-1)d \le p_{\max}^4$ jobs are
assigned to each, since otherwise we can remove $a$ many jobs of the same processing time without affecting feasibility. 

Further, there are only a small number of machine \emph{types}:
for the small machines there are only $p_{\max}^4$ possible values of $T_i$ and all machines
with the same value of $T_i$ behave in the same way; for big machines, all machines with the
same value of $T_i \mod a$ behave the same and thus there are only $a$ many types.
For one of the $p_{\max}^4 + a$ many types $\tau$, we say that
a vector $C\in \ZZ^d$ is a configuration
if the given multiplicities correspond to a potential job assignment, namely
$\sum_{k = 1}^d p_j C_j = T(\tau)$ if $i\in S$ and $\sum_{k = 1}^d p_j C_j \equiv T(\tau) \mod a$ if $i\in B$. Here, $T(\tau)$ is the target (or remainder modulo $a$) corresponding to the type $\tau$.
We denote by $\mathcal C(\tau)$ the set of configurations for type~$\tau$ and by $m(\tau)$ the number of
machines of type~$\tau$. In the following model, we use variables $y_{\tau, C}$ to describe how
many machines of type $\tau$ use configuration $C\in\mathcal C(\tau)$.
\begin{align*}
	\sum_{C\in \mathcal C(\tau)} y_{\tau, C} &= m(\tau) &\text{for all types } \tau \\
	\sum_{\tau \text{ small}}\sum_{C\in \mathcal C(\tau)} C_a \cdot y_{\tau, C} &\le |\{j\mid p_j = a\}| - p^2_{\max} \cdot |B| & \\
	\sum_{\tau}\sum_{C\in \mathcal C(\tau)} C_b \cdot y_{\tau, C} &\le |\{j\mid p_j = b\}| &\text{for all } b\in \{p_1,\dotsc,p_n\} \\
	y_{\tau, C} &\in \ZZ &\text{ for all } \tau, C
\end{align*}
It is straight-forward that this model is indeed equivalent to the previous one.
The integer program has the structure of Multi-Choice Integer Programming (with inequalities) partitioned into
the sets
$\{y_{\tau, C} \mid C\in C(\tau)\}$ for each type $\tau$.
The maximum coefficient of the constraint matrix is $p_{\max}^4$, the number of rows of constraint matrix $A$
is $d + 1$, and the sum of cardinality requirements $t$ is $m$.
Applying \Cref{cor:ilp2} this leads to a running time of $p_{\max}^{O(d)} m^{O(1)}$, assuming
the values $|\{j\mid p_j = b\}|$ have been precomputed.

\paragraph{Makespan Minimization on Uniform Machines.}
We use a binary search framework to the problem, where given $U\in\RR$
our goal is to determine if there is a solution $\sigma$ with
machine loads $\sum_{j : \sigma(j) = i} p_j \le T_i := \lfloor s_i U \rfloor$ for each machine~$i$.
Since the optimal value has the form $L / s_i$ for some $i\in{1,\dotsc,m}$ and $L\in\{0,1,\dotsc,n p_{max}\}$,
a binary search all these values increases the running time by a factor of only 
$O(\log(m \cdot n \cdot p_{\max}))$, which is
polynomial in the input length.
Our techniques rely heavily on exact knowledge of machine loads. To emulate this,
we add $T_1 + \cdots + T_m - p_1 - \cdots - p_n$ many dummy jobs 
with processing time $1$. Clearly, this maintains feasibility and, more precisely,
creates a feasible instance of Multiway Partitioning, assuming that $U$ is a valid upper bound.
Note that $d$ may increase by one, which is negligible with respect to our running time.
We can now solve the resulting instance using the algorithm for Multiway Partitioning.

\section{High-Multiplicity Encoding}
\label{sec:high-mult}
Recall that in the high-multiplicity setting we are given $d$ processing times $p_1,\dotsc,p_d$
with multiplicities $n_1,\dotsc,n_d$ (next to the machine speeds $s_1,\dotsc,s_d$). The encoding length
is therefore
\begin{equation*}
	\langle \mathrm{enc} \rangle = \Theta(d \log(p_{\max}) + d\log(n) + m \log(s_{\max})) \ .
\end{equation*}
A solution is encoded by vectors $x_{ij} \in \ZZ$ that indicate how many jobs of processing time $p_j$
are assigned to machine~$i$, which is of size polynomial in $\langle \mathrm{enc} \rangle$.
Through a careful implementation we can solve
Makespan Minimization on Uniform Machines
also in time $p_{\max}^{O(d)} \cdot \langle \mathrm{enc} \rangle^{O(1)}$.
The binary search explained at the end of \Cref{sec:appl} adds only an overhead factor
of $O(\log(n m p_{\max})) \le \langle \mathrm{enc} \rangle^{O(1)}$
Notice that the ILP solver in \Cref{sec:appl} already runs in that time
of $p_{\max}^{O(d)} \cdot m^{O(1)}$, which is sufficiently fast. This needs to be repeated
$d$ times for each guess of $a$.
Afterwards, we need to implement the Greedy type of algorithm in \Cref{sec:main}.
Instead of removing one bundle or job at a time, we iterate over all machines and processing times and
remove as many bundles as possible in a single step.
This requires only time $O(md)$. Similarly, we can add back bundles and jobs
of size $a$ in time $O(md)$ by always adding as many bundles as possible in one step.

\bibliographystyle{plain}
\bibliography{ref}

\end{document}